\theoremstyle{acmplain}
\newtheorem{thrm}{Theorem}
\newtheorem{lem}[thrm]{Lemma}
\newtheorem{prop}[thrm]{Proposition}
\newcommand{\N}{\mathbb{N}}
\newcommand{\Q}{\mathbb{Q}}
\newcommand{\Rp}{\mathbb{R}_{\geq 0}}
\newcommand{\A}{\mathbb{A}}
\newcommand{\B}{\mathbb{B}}
\newcommand{\R}{\mathbb{R}}
\newcommand{\mC}{\mathcal{C}}
\newcommand{\mH}{\mathcal{H}}
\newcommand{\mM}{\mathcal{M}}
\newcommand{\mR}{\mathcal{R}}
\newcommand{\mF}{\mathcal{F}}
\newcommand{\mS}{\mathcal{S}}
\newcommand{\bv}{\boldsymbol{v}}
\newcommand{\bx}{\boldsymbol{x}}
\newcommand{\ba}{\boldsymbol{a}}
\newcommand{\bc}{\boldsymbol{c}}
\newcommand{\bh}{\boldsymbol{h}}
\newcommand{\bg}{\boldsymbol{g}}
\newcommand{\bw}{\boldsymbol{w}}
\newcommand{\be}{\boldsymbol{e}}
\newcommand{\bff}{\boldsymbol{f}}
\newcommand{\bzer}{\boldsymbol{0}}
\newcommand{\card}{\operatorname{card}}
\newcommand{\cone}{\boldsymbol{cone}}
\newcommand{\lin}{\boldsymbol{lin}}
\begin{document}

\title{Termination of linear loops under commutative updates}

\author{Ruiwen Dong}
\email{ruiwen.dong@kellogg.ox.ac.uk}
\affiliation{%
  \institution{University of Oxford}
  \streetaddress{Department of Computer Science, 7 Parks Road}
  \city{Oxford}
  \country{United Kingdom}
  \postcode{OX1 3QG}
}

\begin{abstract}
We consider the following problem: given $d \times d$ rational matrices $A_1, \ldots, A_k$ and a polyhedral cone $\mathcal{C} \subset \mathbb{R}^d$, decide whether there exists a non-zero vector whose orbit under multiplication by $A_1, \ldots, A_k$ is contained in $\mathcal{C}$.
This problem can be interpreted as verifying the termination of multi-path while loops with linear updates and linear guard conditions.
We show that this problem is decidable for \emph{commuting} invertible matrices $A_1, \ldots, A_k$.
The key to our decision procedure is to reinterpret this problem in a purely algebraic manner.
Namely, we discover its connection with modules over the polynomial ring $\mathbb{R}[X_1, \ldots, X_k]$ as well as the polynomial semiring $\mathbb{R}_{\geq 0}[X_1, \ldots, X_k]$.
The loop termination problem is then reduced to deciding whether a submodule of $\left(\mathbb{R}[X_1, \ldots, X_k]\right)^n$ contains a ``positive'' element.
\end{abstract}


\keywords{loop termination, commuting matrices, positive polynomials, module over polynomial ring}

\maketitle

\begin{acks}
The author acknowledges support from UKRI Frontier Research Grant EP/X033813/1.
\end{acks}

\section{Introduction}\label{sec:intro}
We consider the termination problem for multipath (or branching) linear loops over the real numbers.
We are concerned with problems of the following form: given $d \times d$ matrices $A_1, \ldots, A_k$ and a subset $\mC$ of $\R^d$, decide whether there exists a vector $\bv \in \mC$ whose orbit under multiplication by $A_1, \ldots, A_k$ is contained in $\mC$.
In other words, denote by $\langle A_1, \ldots, A_k \rangle$ the monoid generated by $A_1, \ldots, A_k$, we want to decide whether there exists $\bv \in \mC$ such that $A \bv \in \mC$ for all $A \in \langle A_1, \ldots, A_k \rangle$.
Such a problem can be interpreted as the non-termination of the following linear loop:

\begin{equation}\label{eq:multloop}
\textit{while } \bx \in \mC \textit{ do } (\bx \coloneqq A_1 \bx \textit{ or } \cdots \textit{ or } \bx \coloneqq A_k \bx).
\end{equation}
The problem is whether there exists an initial value $\bv$ such that the loop never terminates.
Analysis of loops of this form is an important part of analysing more complex programs~\cite{jeannet2014abstract,kincaid2019closed}.
Various tools such as computing polynomial invariants~\cite{hrushovski2018polynomial}, porous invariants~\cite{lefaucheux2021porous}, ranking functions~\cite{ben2014ranking, ben2019multiphase} have been developed to analyse properties of such multipath loops.

Termination of linear loops with a \emph{single} update ($k = 1$) has been intensely studied for the last twenty years.
Tiwari~\cite{tiwari2004termination} first showed its decidability in the case where the guard condition $\mC$ is a convex polyhedron.
Subsequently, Braverman~\cite{braverman2006termination} extended the decidability result to the case where $\mC$ is the set of rational points in a convex polyhedron; and Hosseini, Ouaknine, Worrell~\cite{HosseiniO019} further extended the result to integer points.\footnote{The original results all concern single \emph{affine} updates. However, they are equivalent to the case of \emph{linear} updates by increasing the dimension by one and homogenizing the coordinates.}

In the above results, the decidability relies on the freedom of choosing the initial value $\bv$ to circumvent the difficulties inherent in deciding termination on a single initial value.
In fact, when $\mC$ is a closed halfspace, termination of single-path linear loops for a \emph{fixed} initial value, that is,
\begin{equation}
\bx \coloneqq \bv; \textit{ while } \bx \in \mC \textit{ do } (\bx \coloneqq A \bx),
\end{equation}
subsumes the \emph{Positivity Problem}~\cite{ouaknine2014positivity}.
The Positivity Problem at dimension greater than five has been shown to be intrinsically hard; and its decidability would entail major breakthroughs in open problems in Diophantine approximation~\cite{ouaknine2014positivity}.
Therefore, the freedom of choice for the initial value is crucial in proving decidability results.

In this paper we consider termination of the multipath loop~\eqref{eq:multloop} under \emph{commutative} updates, that is, when $A_1, \ldots, A_k$ are commuting matrices.
In~\cite{babai1996multiplicative}, Babai, Beals, Cai, Ivanyos and Luks proved various results on reachability problems of multipath linear loops under commutative updates.
A continuous analogue of the reachability problem has been studied in~\cite{ouaknine2019decidability}.
For the termination problem, we restrict to the case where $\mC$ is a \emph{polyhedral cone}, that is, 
\[
\mC \coloneqq \left\{\bx \in \R^d \;\middle|\; \bc_i^{\top} \bx \geq 0, i = 1, \ldots, n \right\}
\]
for some $\bc_1, \ldots, \bc_n \in \R^d$.
We will also exclude the initial value $(0, \ldots, 0)$ since it is trivially invariant under linear updates.
For the sake of simplicity, we will suppose $A_1, \ldots, A_k$ to have rational entries and are invertible, and that $\bc_1, \ldots, \bc_n \in \Q^d$.
Our main result (see Theorem~\ref{thm:main}) is that termination of such multipath linear loops is decidable.

If one tries to adapt the proofs for the single update case to multiple updates, natural attempts will start by simultaneously diagonalizing the update matrices and analysing their actions on the cone geometrically.
Unfortunately such analysis depends on classifying the complex eigenvalues of all the update matrices.
Besides the unavoidable tedious case analysis, it is also not clear how to circumvent certain problems related to Diophantine approximation, as such circumvention appears ad hoc in the single update case.
Our main contribution is a completely algebraic approach that overcomes this difficulty.
In particular, using the duality of linear programming, we reduce the problem to deciding whether an orbit cone is \emph{salient}.
We show that deciding whether an orbit is salient is equivalent to deciding whether a submodule of $\left(\mathbb{R}[X_1, \ldots, X_k]\right)^n$ contains an element with only positive coefficients.
We then use a result of Einsiedler, Mouat, and Tuncel~\cite{einsiedler2003does} and the decidability of first order theory of the reals to deduce the desired result.

\section{Preliminaries}\label{sec:prelim}
\subsection{Convex geometry}
A \emph{convex cone} in $\R^d$ is a subset $\mC \subset \R^d$ that satisfies $r \in \R_{\geq 0}, \bv \in \mC \implies r \cdot \bv \in \mC$ and $\bv, \bw \in \mC \implies \bv + \bw \in \mC$.
In this article, we will use the word \emph{cone} to imply convex cones.
A \emph{polyhedral} cone is a cone of the form
\[
\mC \coloneqq \left\{\bx \in \R^d \;\middle|\; \bc_i^{\top} \bx \geq 0, i = 1, \ldots, n \right\}.
\]
for finitely many given non-zero vectors $\bc_1, \ldots, \bc_n \in \R^d$.

Given mutually commuting matrices $A_1, \ldots, A_k \in \R^{d \times d}$, and a vector $\bv \in \R^d$, denote by
\[
\langle A_1, \ldots, A_k \rangle \cdot \bv \coloneqq \left\{ A_1^{n_1} \cdots A_k^{n_k} \bv \;\middle|\; n_1, \ldots, n_k \in \N\right\}
\]
the orbit of $\bv$ under multiplication by $A_1, \ldots, A_k$.
Similarly, given a cone $\mC \subset \R^d$, denote by
\[
\langle A_1, \ldots, A_k \rangle \cdot \mC \coloneqq \left\{ A_1^{n_1} \cdots A_k^{n_k} \bv \;\middle|\; n_1, \ldots, n_k \in \N, \bv \in \mC \right\}
\]
the orbit of $\mC$ under multiplication by $A_1, \ldots, A_k$.

Given an arbitrary subset $\mS \subset \R^d$, denote by $\cone(\mS)$ the smallest cone in $\R^d$ containing $\mS$.
Similarly, denote by $\lin(\mS)$ the smallest linear space in $\R^d$ containing $\mS$.

A \emph{closed halfspace} of a $\R^d$ is a subset $\mH$ defined by $\mH \coloneqq \{\bx \mid \bv^{\top} \bx \geq 0\}$ for some $\bv \in \R^d \setminus \{0^d\}$.
Here, $0^d$ denotes the zero vector of $\R^d$.
By the Supporting Hyperplane Theorem~\cite[Chapter~2.5.2]{boyd2004convex} (applied at the point $0^d$), every cone $\mC \subset \R^d$ is either contained in some closed halfspace or it is $\R^d$ itself.

A cone $\mC$ is called \emph{salient} if $\mC \cap - \mC = \{0^d\}$, where $- \mC \coloneqq \{- \bv \mid \bv \in \mC\}$.
An arbitrary set $\mS \subset \R^d$ is called salient if $\cone(\mS)$ is salient.
Obviously every salient cone (and hence every salient set) is contained in a closed halfspace because a salient cone cannot be $\R^d$.

\subsection{Modules over polynomial rings}
Fix an integer $k \geq 1$.
In this paper we consider the polynomial ring $\A \coloneqq \R[X_1, \ldots, X_k]$ and its sub-semiring
\begin{multline*}
    \A^+ = \R_{\geq 0}[X_1, \ldots, X_k] \\
    \coloneqq \Bigg\{\sum_{n_1, \ldots, n_k \in \N} r_{n_1, \ldots, n_k} X_1^{n_1} \cdots X_k^{n_k} \;\Bigg|\; r_{n_1, \ldots, n_k} \in \R_{\geq 0}, \\
    r_{n_1, \ldots, n_k} = 0 \text{ for all but finitely many } r_{n_1, \ldots, n_k} \Bigg\}.
\end{multline*}
That is, $\A^+ \subset \A$ is the sub-semiring of polynomials with non-negative coefficients.
Define additionally $\A^{++} \coloneqq \A^+ \setminus \{0\}$.

Let $\A^n \coloneqq \{(f_1, \ldots, f_n) \mid f_1, \ldots, f_n \in \A\}$.
An $\A$-\emph{submodule} of $\A^n$ is a subset $\mM \subset \A^n$ such that $\bv, \bw \in \mM \implies \bv + \bw \in \mM$ and $f \in \A, \bv \in \mM \implies f \cdot \bv \in \mM$.
Here $f \cdot (f_1, \ldots, f_n) \coloneqq (f f_1, \ldots, f f_n)$.

For example, given $\bg_1, \ldots, \bg_m \in \A^n$, the set
\begin{multline*}
    \mM = \bg_1 \A + \cdots + \bg_m \A \\
    \coloneqq \{f_1 \cdot \bg_1 + \cdots + f_m \cdot \bg_m \mid f_1, \ldots, f_m \in \A\}
\end{multline*}
is an $\A$-submodule of $\A^n$.
In this case we will say that $\bg_1, \ldots, \bg_m$ form a \emph{basis} of $\mM$. 
The above definition still works when one replaces $\A$ by any commutative ring.

Naturally, we define the following subset of $\A^n$:
\[
\left(\A^+\right)^n \coloneqq \{(f_1, \ldots, f_n) \mid f_1, \ldots, f_n \in \A^+\}.
\]
The key ingredient of our paper is a deep result by Einsiedler, Mouat, and Tuncel~\cite{einsiedler2003does} that characterizes when an $\A$-submodule of $\A^n$ contains an element of $\left(\A^+\right)^n \setminus \{0^n\}$.
We will state this result in Section~\ref{sec:pos}.

\section{Overview of the decision procedure}\label{sec:overview}
In this section we give an overview of the procedure that decides (non-)termination of the multipath loops under commutative updates.
In rigorous terms, given commuting matrices $A_1, \ldots, A_k$ and a polyhedral cone $\mC$, we want to decide whether there exists a vector $\bv \in \R^d \setminus \{0^d\}$, such that $\langle A_1, \ldots, A_k \rangle \cdot \bv \subset \mC$.

Let the cone $\mC$ be given by
\begin{equation}\label{eq:defC}
    \mC \coloneqq \left\{\bx \in \R^d \;\middle|\; \bc_i^{\top} \bx \geq 0, i = 1, \ldots, n \right\}.
\end{equation}
We define its \emph{dual cone} $\mC^*$ as
\begin{equation}\label{eq:defCstar}
    \mC^* \coloneqq \left\{\bx \in \R^d \;\middle|\; \bx = r_1 \bc_1 + \cdots + r_n \bc_n, r_1, \ldots, r_n \in \Rp \right\}.
\end{equation}

Our first lemma is a duality type argument that reduces the termination problem to deciding whether an orbit is contained in some halfspace.

\begin{restatable}{lem}{lemdual}\label{lem:dual}
The following two conditions are equivalent:
\begin{enumerate}[(i)]
    \item There exists $\bv \in \R^d \setminus \{0^d\}$, such that $\langle A_1, \ldots, A_k \rangle \cdot \bv \subset \mC$.
    \item The orbit $\langle A_1^{\top}, \ldots, A_k^{\top} \rangle \cdot \mC^*$ is contained in a closed halfspace.
\end{enumerate}
\end{restatable}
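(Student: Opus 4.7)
The plan is to prove this equivalence by a standard transposition trick followed by an application of the Supporting Hyperplane Theorem already recalled in the preliminaries.

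First I would rewrite condition (i) in ``dual'' form by observing the scalar identity
\[
\bc_i^{\top} \left( A_1^{n_1} \cdots A_k^{n_k} \right) \bv \;=\; \left( (A_1^{\top})^{n_1} \cdots (A_k^{\top})^{n_k} \bc_i \right)^{\top} \bv .
\]
So (i) is equivalent to the existence of a nonzero $\bv \in \R^d$ such that $\bw^{\top} \bv \geq 0$ for every $\bw$ in the set
\[
T \;\coloneqq\; \langle A_1^{\top}, \ldots, A_k^{\top} \rangle \cdot \{\bc_1, \ldots, \bc_n\}.
\]
Such a $\bv$ exists if and only if $\cone(T)$ is contained in some closed halfspace: the ``if'' direction is immediate (take $\bv$ to be the normal of that halfspace), and the ``only if'' direction is trivial since $T \subset \cone(T)$ and $\bv^\top \bw \geq 0$ extends linearly.

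Next I would reduce (ii) to the same condition on $\cone(T)$ by showing
\[
\cone\!\left( \langle A_1^{\top}, \ldots, A_k^{\top} \rangle \cdot \mC^* \right) \;=\; \cone(T).
\]
The inclusion $\supseteq$ holds because each $\bc_i$ lies in $\mC^*$, so every orbit point $A^{\top} \bc_i$ lies in $\langle A_1^{\top}, \ldots, A_k^{\top}\rangle \cdot \mC^*$. For $\subseteq$, any orbit element has the form $A^{\top}(r_1 \bc_1 + \cdots + r_n \bc_n) = r_1 (A^{\top} \bc_1) + \cdots + r_n (A^{\top} \bc_n)$ with $r_i \geq 0$, which is a nonnegative combination of elements of $T$. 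Since a set is contained in a closed halfspace iff its conical hull is, condition (ii) is also equivalent to $\cone(T)$ being contained in a closed halfspace.

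The two conditions (i) and (ii) have therefore been reduced to the single condition ``$\cone(T)$ lies in a closed halfspace,'' which completes the proof. There is no real obstacle here; the only subtlety is the bookkeeping needed to see that using $\mC^*$ (the conic hull of the $\bc_i$) in (ii) is equivalent to using just the generators $\bc_i$ themselves, which boils down to linearity of the $A_j^{\top}$ action. The Supporting Hyperplane Theorem, invoked in the preliminaries, ensures that ``contained in a closed halfspace'' is the right geometric translation of ``the dual cone contains a nonzero vector.''
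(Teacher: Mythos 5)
Your proof is correct and follows essentially the same route as the paper: both hinge on the transpose identity $\bc_i^{\top} A_1^{n_1}\cdots A_k^{n_k}\bv = \bigl((A_1^{\top})^{n_1}\cdots (A_k^{\top})^{n_k}\bc_i\bigr)^{\top}\bv$ and on the fact that $\mC^*$ is the conical hull of the $\bc_i$, so that both conditions reduce to the orbit of the generators lying in the halfspace with normal $\bv$. (Your closing appeal to the Supporting Hyperplane Theorem is not actually needed here, since a closed halfspace by definition already supplies the nonzero normal $\bv$.)
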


The proof of Lemma~\ref{lem:dual} is given in Section~\ref{sec:lintopos}.
Deciding whether an orbit has the property of being contained in a closed halfspace is challenging.
In particular, the smallest cone containing the orbit might not be finitely generated.
It is very unlikely that one can compute any precise description of this cone, otherwise one would be able to decide whether it is contained in a \emph{given} halfspace, and hence solve the Positivity Problem, entailing breakthroughs in number theory~\cite{ouaknine2014positivity}.
Therefore, we will first devise a procedure that decides a similar property for the orbit, namely whether it is salient.
We will later use it as a sub-procedure to decide whether the orbit is contained in a closed halfspace.
The next lemma reduces deciding salient-ness to finding certain tuples of polynomials in $\A^+$.

\begin{restatable}{lem}{lemequiv}\label{lem:equiv}
The following two conditions are equivalent:
\begin{enumerate}[(i)]
    \item The orbit $\langle A_1^{\top}, \ldots, A_k^{\top} \rangle \cdot \mC^*$ is not salient.
    \item There exists a tuple of polynomials $(f_1, \ldots, f_n) \in \left(\A^+\right)^n \setminus \{0^n\}$, not all zero, such that $\sum_{i=1}^n f_i(A_1^{\top}, \ldots, A_k^{\top}) \cdot \bc_i = 0^d$.
\end{enumerate}
\end{restatable}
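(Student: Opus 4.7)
\medskip

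\noindent\textbf{Proof plan for Lemma~\ref{lem:equiv}.}
The strategy is to translate the geometric statement ``not salient'' into a purely algebraic statement about finite non-negative combinations of the vectors $B^{\bn}\bc_i$, where for brevity I write $B_i \coloneqq A_i^{\top}$ and $B^{\bn} \coloneqq B_1^{n_1}\cdots B_k^{n_k}$ for $\bn = (n_1, \ldots, n_k) \in \N^k$. The first step is to identify the cone generated by the orbit. Since $\mC^* = \{r_1\bc_1 + \cdots + r_n\bc_n : r_i \in \Rp\}$ and each $B^{\bn}$ is linear, the orbit $\langle B_1,\ldots,B_k\rangle\cdot\mC^*$ is the union over $\bn \in \N^k$ of the cones $B^{\bn}\mC^* = \cone(B^{\bn}\bc_1,\ldots,B^{\bn}\bc_n)$. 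Consequently,
\[
\cone\bigl(\langle B_1,\ldots,B_k\rangle\cdot\mC^*\bigr) \;=\; \cone\bigl(\{B^{\bn}\bc_i : i \in \{1,\ldots,n\},\ \bn \in \N^k\}\bigr),
\]
and ``not salient'' can be applied to either side interchangeably.

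Next I invoke the standard convex-geometric characterization: a cone generated by a family of vectors is not salient if and only if there is a finite non-negative combination of these generators that equals $0^d$ and whose coefficients are not all zero, while at least one generator with a strictly positive coefficient is itself non-zero. The easy direction is to take $\bw$ to be the contribution of that particular generator; the reverse direction comes from writing $\bw = \sum r^+_{i,\bn} B^{\bn}\bc_i$ and $-\bw = \sum r^-_{i,\bn} B^{\bn}\bc_i$ and adding. Here the generators $B^{\bn}\bc_i$ are all non-zero: the $\bc_i$ are non-zero by definition of $\mC$, and the $B_i$ are invertible by our standing assumption, so the non-zero/zero distinction among generators disappears and the characterization collapses to: some finite family of non-negative coefficients $r_{i,\bn} \geq 0$, not all zero, satisfies $\sum_{i,\bn} r_{i,\bn}\, B^{\bn}\bc_i = 0^d$.

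Finally, I encode the coefficients as polynomials. A finite collection of non-negative numbers $\{r_{i,\bn}\}_{i,\bn}$ corresponds bijectively to an $n$-tuple of polynomials with non-negative coefficients by setting $f_i \coloneqq \sum_{\bn \in \N^k} r_{i,\bn} X_1^{n_1}\cdots X_k^{n_k} \in \A^+$, and the combination being non-trivial matches exactly $(f_1,\ldots,f_n) \in (\A^+)^n \setminus \{0^n\}$. Because the $B_i$ commute, the substitution $X_j \mapsto B_j$ is a well-defined ring homomorphism, and by construction $f_i(B_1,\ldots,B_k)\,\bc_i = \sum_{\bn} r_{i,\bn} B^{\bn}\bc_i$. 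Summing over $i$ gives $\sum_{i=1}^n f_i(B_1,\ldots,B_k)\,\bc_i = 0^d$ iff $\sum_{i,\bn} r_{i,\bn} B^{\bn}\bc_i = 0^d$, which completes the equivalence.

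The potential pitfall, rather than a real obstacle, is in the intermediate step: one must be careful that ``non-trivial combination'' is the right notion, since an identity of the form $0 = 0$ coming from a zero generator would be meaningless. Invertibility of the $A_i$ neutralizes this concern cleanly, but if one wanted to relax that assumption, extra care would be needed to argue that degenerate combinations can be discarded without affecting the equivalence.
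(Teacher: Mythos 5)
Your proposal is correct and follows essentially the same route as the paper: the forward direction adds the non-negative representations of $\bw$ and $-\bw$, and the reverse direction isolates a single generator $B^{\bn}\bc_i$ with strictly positive coefficient (the paper phrases this as splitting off a monomial of some non-zero $f_i$), using invertibility of the $A_j$ and $\bc_i \neq 0^d$ to guarantee that generator is non-zero. The only difference is presentational — you factor out the characterization of non-salience for a cone generated by a family of vectors as an explicit intermediate step, which the paper inlines.
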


The proof of Lemma~\ref{lem:equiv} is given in Section~\ref{sec:lintopos}.

Define the following $\A$-submodule of $\A^n$.
\begin{equation}\label{eq:defM}
    \mM \coloneqq \left\{(f_1, \ldots, f_n) \in \A^n \;\middle|\; \sum_{i=1}^n f_i(A_1^{\top}, \ldots, A_k^{\top}) \cdot \bc_i = \bzer \right\}.
\end{equation}
One easily verifies that $\bg,\bh \in \mM \implies \bg+\bh \in \mM$, as well as $f \in \A, \bg \in \mM \implies f \cdot \bg \in \mM$; so $\mM$ is indeed an $\A$-module.
Observe that condition (ii) of Lemma~\ref{lem:equiv} can be expressed as ``$\mM$ contains an element of $\left(\A^+\right)^n \setminus \{0^d\}^n$''.
The next proposition shows that a basis of $\mM$ is computable.

\begin{restatable}{prop}{propM}\label{prop:M}
Given as input $A_1, \ldots, A_k$ and $\bc_1, \ldots, \bc_n$, one can compute a basis $\bg_1, \ldots, \bg_m \in \A^n$ such that $\mM = \bg_1 \A + \cdots + \bg_m \A$.
\end{restatable}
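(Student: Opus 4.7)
The plan is to realise $\mM$ as the kernel of an explicit $\A$-linear map between finitely presented $\A$-modules, and then compute generators by a standard Gr\"obner basis / syzygy computation over the polynomial ring. Since all input data are rational, the entire procedure can be executed inside $\Q[X_1, \ldots, X_k]$; by flatness of $\R$ over $\Q$, any generating set obtained there remains an $\A$-basis of $\mM$ after base change.

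The first step is to equip $\R^d$ with the structure of an $\A$-module by letting each variable $X_j$ act as $A_j^{\top}$; this is well defined precisely because $A_1^{\top}, \ldots, A_k^{\top}$ pairwise commute. Writing $\be_1, \ldots, \be_d$ for the standard basis of the free $\A$-module $\A^d$, one obtains an explicit finite presentation $\R^d \cong \A^d / N$, where $N \subset \A^d$ is the submodule generated by the $kd$ ``action relations''
\[
X_j \be_i - \sum_{l=1}^d (A_j^{\top})_{l,i}\, \be_l, \qquad 1 \le j \le k,\ 1 \le i \le d.
\]

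Next, I would encode each input vector $\bc_i$ by the constant element $C_i := \sum_{j=1}^d (\bc_i)_j\, \be_j \in \A^d$, and consider the $\A$-linear map
\[
\tilde\phi \colon \A^n \to \A^d, \qquad (f_1, \ldots, f_n) \mapsto \sum_{i=1}^n f_i\, C_i.
\]
Unwinding the action, one checks that $\sum_{i=1}^n f_i(A_1^{\top}, \ldots, A_k^{\top})\, \bc_i = \bzer$ holds in $\R^d$ if and only if $\tilde\phi(f_1, \ldots, f_n)$ lies in $N$. Hence $\mM = \tilde\phi^{-1}(N)$, and the task reduces to computing the preimage of a finitely generated submodule under a given $\A$-linear map between free $\A$-modules of finite rank.

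This last computation is textbook effective commutative algebra: one forms the submodule of $\A^n \oplus \A^d$ generated by the $n$ pairs consisting of the $i$-th standard basis vector of $\A^n$ together with $-C_i$, plus the $kd$ pairs $(\bzer, v)$ for each listed generator $v$ of $N$, computes a Gr\"obner basis with respect to a monomial order that eliminates the $\A^d$-component, and reads off those basis elements whose $\A^d$-component vanishes; their $\A^n$-components give the desired basis $\bg_1, \ldots, \bg_m$ of $\mM$. The only step I view as conceptually non-routine is the initial reformulation of $\R^d$ as a finitely presented $\A$-module through the commuting action of the $A_j^{\top}$; once that is in place, the rest follows from standard Gr\"obner basis algorithms in $\Q[X_1, \ldots, X_k]$, so I do not anticipate any real obstacle beyond careful bookkeeping.
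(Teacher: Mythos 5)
Your proposal is correct, but it takes a genuinely different route from the paper's. You present $\R^d$ as the finitely presented $\A$-module $\A^d/N$, where $N$ is generated by the action relations $X_j \be_i - \sum_l (A_j^{\top})_{l,i}\be_l$, observe that $\mM = \tilde\phi^{-1}(N)$, and compute this preimage by Gr\"obner-basis elimination over $\Q[X_1,\ldots,X_k]$. The paper avoids Gr\"obner bases altogether: by Cayley--Hamilton the characteristic polynomials satisfy $F_j(A_j^{\top}) = 0$, so the elements $F_j\be_i$ all lie in $\mM$; reducing an arbitrary element of $\mM$ modulo these (by univariate division of $X_j^{n_j}$ by $F_j$) brings it to one supported on the finite set $S$ of monomials $X_1^{n_1}\cdots X_k^{n_k}\be_i$ with $n_j < \deg F_j$, and the remaining generators are then read off from the kernel of a single linear map $\R^{\card(S)} \to \R^d$. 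The paper's argument thus needs only linear algebra over $\Q$ and comes with explicit degree bounds on the generators, whereas yours uses heavier but entirely standard machinery and generalizes immediately to any finitely presented target module. One step you should make explicit is the identification of $N$ with the full kernel of the surjection $\A^d \to \R^d$: this requires noting that every element of $\A^d$ is congruent modulo $N$ to a constant vector (by induction on degree, using that $N$ is closed under multiplication by monomials) and that the constant vectors map isomorphically onto $\R^d$. This observation is the exact analogue, in your setting, of the paper's reduction to the monomial set $S$, and it is also what justifies your equivalence ``$\sum_{i} f_i(A_1^{\top},\ldots,A_k^{\top})\bc_i = \bzer$ iff $\tilde\phi(f_1,\ldots,f_n) \in N$''. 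The flatness remark handling the passage from $\Q$ to $\R$ is fine and matches the paper's closing comment after the proposition.
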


The proof of Proposition~\ref{prop:M} is given in Section~\ref{sec:M}.
The next proposition shows that it is decidable whether the module $\mM$ contains an element of $\left(\A^+\right)^n \setminus \{0^n\}$.
Hence condition (ii) of Lemma~\ref{lem:equiv} is decidable.

\begin{restatable}{prop}{proppos}\label{prop:pos}
There is a procedure that, given as input a basis $\bg_1, \ldots, \bg_m \in \A^n$ with rational coefficients such that $\mM = \bg_1 \A + \cdots + \bg_m \A$, decides whether $\mM$ contains an element of $\left(\A^+\right)^n \setminus \{0^n\}$,
and outputs this element $\bff$ in case it exists.
\end{restatable}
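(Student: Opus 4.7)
The plan is to combine the characterization of Einsiedler, Mouat, and Tuncel (recalled in Section~\ref{sec:pos}) with the decidability of the first-order theory of the reals, then to use a simple enumerative search to produce the witness $\bff$ in the affirmative case.

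First, for the decision part: the Einsiedler--Mouat--Tuncel theorem characterizes the existence of a nonzero element of $(\A^+)^n$ inside an $\A$-submodule of $\A^n$ in terms of a semi-algebraic condition on the generators' coefficients; that is, a condition expressible as a first-order sentence in the language of ordered fields with constants given by the (rational) coefficients of $\bg_1, \ldots, \bg_m$. The Tarski--Seidenberg algorithm then decides this sentence in finite time, yielding an effective yes/no answer to the existence problem for $\mM$.

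Second, for producing $\bff$ in the affirmative case, I would enumerate a total degree bound $d = 0, 1, 2, \ldots$ and, for each $d$, test whether there exist $h_1, \ldots, h_m \in \A$ of total degree at most $d$ such that $\bff \coloneqq h_1 \bg_1 + \cdots + h_m \bg_m$ has every component in $\A^+$ with $\bff \neq 0^n$. Introducing an indeterminate coefficient for each monomial of each $h_i$ of degree at most $d$, this test reduces to rational linear programming (non-negativity of every coefficient of $\bff$, together with at least one strict positivity constraint), which is solvable exactly. Conditional on the decision step returning ``yes'', this search terminates at some finite $d$ and outputs a rational witness $\bff$ of the desired form.

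The main obstacle, deferred to Section~\ref{sec:pos}, is to verify that the Einsiedler--Mouat--Tuncel characterization can indeed be phrased as a first-order formula over $(\R, +, \cdot, \leq)$ whose parameters are the rational coefficients of the $\bg_i$: without this effectivity the theorem would supply only an abstract equivalence rather than an algorithm. Once the translation into a Tarski--Seidenberg-compatible formula is established, the remainder of the proof is a routine marriage of real quantifier elimination with polynomial enumeration.
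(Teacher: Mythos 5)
Your proposal defers exactly the step that constitutes the substance of the paper's proof, so as written it has a genuine gap. The Einsiedler--Mouat--Tuncel result is not stated as a semi-algebraic condition on the coefficients of $\bg_1, \ldots, \bg_m$; it is a procedure (their Procedure~6.3) that becomes effective only if one can algorithmically check their Condition~1.3(a) on a compact box $K = [p,q]^d \subset \R_{>0}^d$, namely: for every $\ba \in K$ there exists $\bff \in \mM_{\B}$ with $\bff(\ba) \in \R_{>0}^n$. The inner existential quantifier here ranges over the module itself, which is infinite-dimensional as a real vector space, so this is \emph{not} prima facie a first-order sentence over $(\R, +, \cdot, \leq)$. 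The paper's Lemma~\ref{lem:dual2} supplies the missing idea: for a fixed $\ba$, the set $\{\bff(\ba) \mid \bff \in \mM_{\B}\}$ equals the real linear span of the finitely many vectors $\bg_1(\ba), \ldots, \bg_m(\ba)$, because the module coefficients $h_j$ enter only through their values $h_j(\ba)$, which can be arbitrary reals. Only after this collapse does the condition become a $\forall\exists$ sentence with finitely many real variables, decidable by Tarski's theorem. Your proposal names this translation as ``the main obstacle'' and then assumes it; that is the hole.

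Two further reductions in the paper are also absent from your sketch, though they are less deep. First, the EMT result lives over the Laurent polynomial ring $\B = \R[X_1^{\pm 1}, \ldots, X_k^{\pm 1}]$ and concerns $\left(\B^{++}\right)^n$, i.e.\ tuples all of whose components are nonzero with nonnegative coefficients; the paper needs Lemma~\ref{lem:AtoB} (clear denominators by a monomial) to pass between $\A$ and $\B$, and Lemma~\ref{lem:+to++} (enumerate the support set $I$ of nonzero components and pass to the submodule $\mM_I$) to handle the weaker requirement $\left(\A^+\right)^n \setminus \{0^n\}$. On the positive side, your method for extracting the witness $\bff$ --- enumerating a degree bound and solving a rational linear program in the unknown coefficients of $h_1, \ldots, h_m$ --- is a correct and arguably more self-contained alternative to the paper's reliance on the EMT procedure to output a rational witness; it terminates because a real witness of some finite degree exists, and the corresponding linear system with rational data then has a rational solution.
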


The proof of Proposition~\ref{prop:pos} is given in Section~\ref{sec:pos}.
Combining Lemma~\ref{lem:equiv}, Proposition~\ref{prop:M} and \ref{prop:pos}, we conclude that it is decidable whether the orbit $\langle A_1^{\top}, \ldots, A_k^{\top} \rangle \cdot \mC^*$ is salient.
We now use this to show that it is decidable whether $\langle A_1^{\top}, \ldots, A_k^{\top} \rangle \cdot \mC^*$ is contained in a closed halfspace.
This together with Lemma~\ref{lem:dual} will give us the main result:

\begin{restatable}{thrm}{thrmmain}\label{thm:main}
Given commuting invertible matrices $A_1, \ldots, A_k \in \Q^{d \times d}$ and vectors $\bc_1, \ldots, \bc_n \in \Q^d$ defining a polyhedral cone $\mC$ as in Equation~\eqref{eq:defC}, it is decidable whether there exists $\bv \in \R^d \setminus \{0^d\}$ such that $\langle A_1, \ldots, A_k \rangle \cdot \bv \subset \mC$.
\end{restatable}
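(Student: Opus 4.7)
The plan is to reduce to halfspace-containment via Lemma~\ref{lem:dual}, then bootstrap the saliency oracle built from Lemma~\ref{lem:equiv}, Proposition~\ref{prop:M}, and Proposition~\ref{prop:pos} into a halfspace-containment oracle by induction on $d$. Concretely, it suffices to decide whether $O \coloneqq \langle A_1^\top, \ldots, A_k^\top \rangle \cdot \mC^*$ lies in a closed halfspace of $\R^d$. The base case $d = 0$ is immediate: no halfspace exists in $\R^0$, so the algorithm returns a negative answer, which by Lemma~\ref{lem:dual} corresponds to termination of the loop.

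For the inductive step one first runs the saliency oracle on the current instance. If $O$ is salient, then $\cone(O)$ is a proper convex cone in $\R^d$ (it cannot be $\R^d$ when $d \geq 1$), which by the Supporting Hyperplane Theorem at $0^d$ lies inside a closed halfspace, and we return non-termination. Otherwise Lemma~\ref{lem:equiv} provides a witness $(f_1, \ldots, f_n) \in (\A^+)^n \setminus \{0^n\}$ with $\sum_i f_i(A_1^\top, \ldots, A_k^\top) \bc_i = 0^d$. From this witness one extracts an explicit non-zero rational element of the lineality space $L \coloneqq \cone(O) \cap -\cone(O)$: pick any monomial $X_1^{n_1} \cdots X_k^{n_k}$ with strictly positive coefficient in some $f_i$ and set $w \coloneqq (A_1^\top)^{n_1} \cdots (A_k^\top)^{n_k} \bc_i$; then $w \neq 0^d$ (by invertibility of the $A_j^\top$ and $\bc_i \neq 0^d$), while both $w$ and $-w$ are visibly non-negative combinations of orbit vectors and hence lie in $\cone(O)$.

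Next, let $V$ be the smallest subspace of $\R^d$ containing $w$ that is invariant under all $A_j^\top$. Since the $A_j^\top$ commute, $V$ is the $\Q$-span of a finite set of iterates of $w$ and is computable by elementary linear algebra; and since $L$ is itself $A_j^\top$-invariant, $V \subseteq L \subseteq \cone(O)$. The $A_j^\top$ descend to commuting invertible rational operators on $\R^d / V$, and the $\bc_i$ to cosets $\bar \bc_i$. The key equivalence underlying the descent is: a halfspace $\{x : u^\top x \geq 0\}$ of $\R^d$ contains $O$ if and only if $u$ annihilates $V$ (because $V \subseteq \cone(O)$ is a subspace, so $u^\top x \geq 0$ and $u^\top(-x) \geq 0$ force $u^\top x = 0$ on $V$), if and only if the induced halfspace $\{y : \bar u^\top y \geq 0\}$ of $\R^d / V$ contains the image of $O$. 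Hence the problem descends to dimension $d - \dim V < d$, and the recursion terminates in at most $d$ rounds.

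The main obstacle is precisely this extraction-and-descent step, which has to turn the abstract positivity condition of Lemma~\ref{lem:equiv} into geometric information about the lineality space: one must distill an $\R^d$-witness of non-saliency from the $(\A^+)^n$-witness, and identify a rational $A_j^\top$-invariant subspace through which halfspace-containment descends faithfully. Commutativity of the updates is what keeps $V$ finite-dimensional and effectively computable, and the non-negativity of the witness coefficients is precisely what makes non-zero elements of $L$ explicit as concrete orbit vectors.
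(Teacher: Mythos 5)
Your proposal is correct and follows essentially the same route as the paper: reduce to halfspace-containment of the dual orbit via Lemma~\ref{lem:dual}, induct on $d$ using the saliency oracle, extract a rational lineality vector $\bw$ from the positive witness exactly as in the proof of Lemma~\ref{lem:equiv}, pass to the quotient by the invariant subspace it generates, and recurse. The only differences are cosmetic (base case $d=0$ instead of $d=1$, and phrasing the descent via normals annihilating $V$ rather than via the Supporting Hyperplane Theorem on the quotient); one small nitpick is that commutativity is not what makes $V$ finite-dimensional — any invariant subspace of $\R^d$ is — it is only used elsewhere in the argument.
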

\begin{proof}
By Lemma~\ref{lem:dual}, it suffices to construct a procedure that decides whether the orbit $\langle A_1^{\top}, \ldots, A_k^{\top} \rangle \cdot \mC^*$ is contained in a closed halfspace.
We use induction on $d$.
When $d = 1$, the decision procedure is easy: $\langle A_1^{\top}, \ldots, A_k^{\top} \rangle \cdot \mC^*$ is contained in a closed halfspace if and only if $\bc_1, \ldots, \bc_n$ all have the same sign and $A_1, \ldots, A_k$ are all positive rationals.

Suppose we have a decision procedure for all dimensions smaller than $d$, we now construct a procedure for dimension $d$.
By Lemma~\ref{lem:equiv}, Proposition~\ref{prop:M} and \ref{prop:pos}, it is decidable whether the orbit $\langle A_1^{\top}, \ldots, A_k^{\top} \rangle \cdot \mC^*$ is salient.

If the orbit is salient, then it is contained in some closed halfspace.
If the orbit is not salient, then $\cone\left(\langle A_1^{\top}, \ldots, A_k^{\top} \rangle \cdot \mC^*\right)$ contains the vectors $\bw$ and $- \bw$ for some $\bw \in \R^d \setminus \{0^d\}$.
Such a $\bw$ can be effectively computed in the following way:
by Proposition~\ref{prop:pos}, one can compute an element $\bff \in \mM \cap \left(\A^+\right)^n \setminus \{0^n\}$.
Then, as in the proof of Lemma~\ref{lem:equiv} (see Section~\ref{sec:lintopos}), one can obtain from $\bff$ a non-zero vector $\bw \in \Q^d$ such that $\pm \bw \in \cone\left(\langle A_1^{\top}, \ldots, A_k^{\top} \rangle \cdot \mC^*\right)$.

We then compute (a basis of) the linear space 
\[
W \coloneqq \lin\left(\langle A_1^{\top}, \ldots, A_k^{\top} \rangle \cdot \bw\right).
\]
This can be done by starting with $W \coloneqq \R \bw$ and repeatedly replacing $W$ by $W + A_1 W + \cdots + A_k W$ until $W = W + A_1 W + \cdots + A_k W$.
This process must terminate since the dimension of $W$ is always growing.
The resulting $W = \lin\left(\langle A_1^{\top}, \ldots, A_k^{\top} \rangle \bw\right)$ is invariant under the linear maps $A_1^{\top}, \ldots, A_k^{\top}$.
Since $\R \bw \subset \cone\left(\langle A_1^{\top}, \ldots, A_k^{\top} \rangle \cdot \mC^*\right)$ we have $W \subset \cone\left(\langle A_1^{\top}, \ldots, A_k^{\top} \rangle \cdot \mC^*\right)$.

The quotient $\R^d/W$ is isomorphic to $\R^{d - \dim(W)}$ after fixing a basis.
We can suppose this basis admits only rational entries.
Composing with the canonical map $\R^d \rightarrow \R^d/W$ we have an (effectively computable) map
\[
\pi \colon \R^d \longrightarrow \R^{d - \dim(W)}
\]
with $\ker(\pi) = W$.
Since multiplication by the matrices $A_1^{\top}, \ldots, A_k^{\top}$ leaves $W$ invariant, these matrices act on $\R^d/W \cong \R^{d - \dim(W)}$ linearly.
We denote these actions by 
\[
B_1, \ldots, B_k \in \Q^{(d - \dim(W)) \times (d - \dim(W))}.
\]
These matrices are obviously invertible, commuting and effectively computable.

Using the induction hypothesis on the dimension $d$, we can decide whether the image
\[
\pi\left(\langle A_1^{\top}, \ldots, A_k^{\top} \rangle \cdot \mC^* \right) = \langle B_1, \ldots, B_k \rangle \cdot \pi(\mC^*)
\]
is contained in some closed halfspace (of $\R^d/W \cong \R^{d - \dim(W)}$).
If the image $\pi\left(\langle A_1^{\top}, \ldots, A_k^{\top} \rangle \cdot \mC^* \right)$ is contained in a closed halfspace $\mH \subset \R^{d - \dim(W)}$ then $\langle A_1^{\top}, \ldots, A_k^{\top} \rangle \cdot \mC^*$ is contained in the closed halfspace $\pi^{-1}(\mH) \subset \R^d$.
Otherwise, 
\[
\cone\left(\pi\left(\langle A_1^{\top}, \ldots, A_k^{\top} \rangle \cdot \mC^* \right)\right) = \R^{d - \dim(W)}.
\]
Since $\pi(\bv + \bw) = \pi(\bv) + \pi(\bw)$ for all $\bv, \bw \in \R^d$, the above equation yields
\[
\pi\left(\cone\left(\langle A_1^{\top}, \ldots, A_k^{\top} \rangle \cdot \mC^* \right)\right) = \R^{d - \dim(W)}.
\]
Thus, any $\bv \in \R^d$ can be written as $\bv' + \bv''$ where 
\[
\bv' \in \cone\left(\langle A_1^{\top}, \ldots, A_k^{\top} \rangle \cdot \mC^* \right),
\]
and $\bv'' \in W$.
Recall that $W \subset \cone\left(\langle A_1^{\top}, \ldots, A_k^{\top} \rangle \cdot \mC^*\right)$.
Therefore, $\bv = \bv' + \bv'' \in \cone\left(\langle A_1^{\top}, \ldots, A_k^{\top} \rangle \cdot \mC^*\right)$ for all $\bv \in \R^d$.
So 
\[
\cone\left(\langle A_1^{\top}, \ldots, A_k^{\top} \rangle \cdot \mC^*\right) = \R^d
\]
is not contained in any closed halfspace.
\end{proof}

\section{From linear loops to positive polynomials}\label{sec:lintopos}

In this section we give the proofs of Lemma~\ref{lem:dual} and \ref{lem:equiv}.

\lemdual*
\begin{proof}
    (i) $\implies$ (ii).
    Suppose $\bv \in \R^d \setminus \{0^d\}$, such that $\langle A_1, \ldots, A_k \rangle \cdot \bv \subset \mC$.
    Then by the definition of $\mC$, for all $n_1, \ldots, n_k \in \N$ and $i = 1, \ldots, n$, we have $\bc_i^{\top} A_1^{n_1} \cdots A_k^{n_k} \bv \geq 0$.
    Taking the transpose yields
    \begin{equation}\label{eq:transpose}
        \bv^{\top} \left(A_k^{\top}\right)^{n_k} \cdots \left(A_1^{\top}\right)^{n_1} \bc_i \geq 0, \quad i = 1, \ldots, n; n_1, \ldots, n_k \in \N.
    \end{equation}
    
    Let $\mH$ be the closed halfspace defined by $\mH \coloneqq \{\bx \in \R^d \mid \bv^{\top} \bx \geq 0\}$.
    Then Equation~\eqref{eq:transpose} shows that $\langle A_1^{\top}, \ldots, A_k^{\top} \rangle \cdot \bc_i \subset \mH$ for all $i$.
    Since the cone $\mC^*$ is generated by $\bc_1, \ldots, \bc_n$, we have $\langle A_1^{\top}, \ldots, A_k^{\top} \rangle \cdot \mC^* \subset \mH$.

    (ii) $\implies$ (i).
    Suppose $\langle A_1^{\top}, \ldots, A_k^{\top} \rangle \cdot \mC^*$ is contained in the halfspace $\mH$ defined by $\mH \coloneqq \{\bx \in \R^d \mid \bv^{\top} \bx \geq 0\}$, where $\bv \neq 0^d$.
    Then Equation~\eqref{eq:transpose} holds for all $i$ and all $n_1, \ldots, n_k \in \N$.
    Taking the transpose yields $\bc_i^{\top} A_1^{n_1} \cdots A_k^{n_k} \bv \geq 0$ and hence $\langle A_1, \ldots, A_k \rangle \cdot \bv \subset \mC$.
\end{proof}

\lemequiv*
\begin{proof}
    (i) $\implies$ (ii).
    If $\langle A_1^{\top}, \ldots, A_k^{\top} \rangle \cdot \mC^*$ is not salient, then there exists $\bv \neq 0^d$ with $\bv, -\bv \in \cone\left(\langle A_1^{\top}, \ldots, A_k^{\top} \rangle \cdot \mC^*\right)$.
    Therefore, there exist finitely many positive reals $r_{i, n_1, \ldots, n_k}$, such that
    \[
        \sum_{i = 1}^n \sum_{n_1, \ldots, n_k \in \N} r_{i, n_1, \ldots, n_k} \left(A_1^{\top}\right)^{n_1} \cdots \left(A_k^{\top}\right)^{n_k} \cdot \bc_i = \bv.
    \]
    
    In order words, there exist polynomials with non-negative coefficients $(g_1, \ldots, g_n) \in \left(\A^+\right)^n$, such that $\sum_{i=1}^n g_i(A_1^{\top}, \ldots, A_k^{\top}) \cdot \bc_i = \bv$.
    Similarly, there exist polynomials $(h_1, \ldots, h_n) \in \left(\A^+\right)^n$, such that $\sum_{i=1}^n h_i(A_1^{\top}, \ldots, A_k^{\top}) \cdot \bc_i = - \bv$.
    Since $\bv \neq 0^d$, we know that $g_1, \ldots, g_n$ are not all zero.
    Let $f_i \coloneqq g_i + h_i, i = 1, \ldots, n$, then $\sum_{i=1}^n f_i(A_1^{\top}, \ldots, A_k^{\top}) \cdot \bc_i = 0^d$; and $(f_1, \ldots, f_n) \in \left(\A^+\right)^n \setminus \{0^n\}$.

    (ii) $\implies$ (i).
    Suppose $\sum_{i=1}^n f_i(A_1^{\top}, \ldots, A_k^{\top}) \cdot \bc_i = 0^d$ for some $(f_1, \ldots, f_n) \in \left(\A^+\right)^n \setminus \{0^n\}$.
    Without loss of generality suppose $f_1 \neq 0$. Let $g$ be a monomial of $f_1$ and write $f_1 = g + f'_1$.
    Since $f_1 \in \A^+$, we have also $g, f'_1 \in \A^+$.
    We have $\bv \coloneqq g(A_1^{\top}, \ldots, A_k^{\top}) \cdot \bc_1 \neq 0^d$ because $A_1^{\top}, \ldots, A_k^{\top}$ are invertible, $\bc_1 \neq 0^d$ and because $g$ is a monomial.
    Then $\bv = g(A_1^{\top}, \ldots, A_k^{\top}) \cdot \bc_1 \in \langle A_1^{\top}, \ldots, A_k^{\top} \rangle \cdot \mC^*$ and
    \begin{multline*}
        - \bv = f'_1(A_1^{\top}, \ldots, A_k^{\top}) \cdot \bc_1 + \sum_{i=2}^n f_i(A_1^{\top}, \ldots, A_k^{\top}) \cdot \bc_i \\
        \in \cone\left(\langle A_1^{\top}, \ldots, A_k^{\top} \rangle \cdot \mC^*\right).
    \end{multline*}
    Thus, $\cone\left(\langle A_1^{\top}, \ldots, A_k^{\top} \rangle \cdot \mC^*\right)$ contains both $\bv$ and $-\bv$ and is therefore not salient.
\end{proof}

\section{Computing the module $\mM$}\label{sec:M}
In this section we prove Proposition~\ref{prop:M}.
Let $\be_1, \ldots, \be_n$ be the standard basis of $\A^n$,
that is, 
\[
\be_i = (0, \cdots, 0, 1, 0, \cdots, 0),
\]
where the $i$-th coordinate is one.

The vector space $\R^d$ can be considered as an $\A$-module by $f \cdot \bv \coloneqq f(A_1^{\top}, \ldots, A_k^{\top}) \bv$ for all $f \in \A, \bv \in \R^d$. 
Define the following map of $\A$-modules:
\begin{align}
    \varphi : \A^n & \longrightarrow \R^d, \\
            \sum_{i = 1}^n f_i \be_i & \mapsto \sum_{i = 1}^n f_i(A_1^{\top}, \ldots, A_k^{\top}) \bc_i.
\end{align}
Then the $\A$-module $\mM$ defined in~\eqref{eq:defM} is exactly $\ker(\varphi)$.

\propM*
\begin{proof}

    Let $F_1 \in \R[X_1], \ldots, F_k \in \R[X_k]$ be the (monic) characteristic polynomials of $A_1^{\top}, \ldots, A_k^{\top}$, respectively.
    Then $F_j(A_j^{\top}) = 0^{d \times d}$, so obviously $F_j \be_i \in \mM$ for all $1 \leq j \leq k, 1 \leq i \leq n$.
    Define the (finite) set of monomials
    \[
    S \coloneqq \{X_1^{n_1} \cdots X_k^{n_k} \be_i \mid n_1 < \deg F_1, \ldots, n_k < \deg F_k, 1 \leq i \leq n\}.
    \]
    Consider the map
    \begin{align}
        \lambda : \R^{\card(S)} & \longrightarrow \R^d, \\
                (r_s)_{s \in S} & \mapsto \sum_{s \in S} r_s \varphi(s).
    \end{align}
    One can effectively compute a basis $(r_{1s})_{s \in S}, \ldots, (r_{ps})_{s \in S}$ of $\ker(\lambda)$.
    Define the set 
    \[
    \mR \coloneqq \left\{ \sum_{s \in S} r_{js} \cdot s \;\middle|\; j = 1, \ldots, p \right\}
    \]
    of elements in $\A^n$.
    Clearly $\mR \subset \ker(\varphi) = \mM$.
    Define additionally the following subset of $\mM$: 
    \[
    \mF \coloneqq \left\{F_j \be_i \;\middle|\; 1 \leq j \leq k, 1 \leq i \leq n\right\}.
    \]
    We claim that $\mR \cup \mF$ is a basis for $\mM$.

    To prove this claim, let $L$ be the $\A$-module generated by $\mR \cup \mF$. It is a submodule of $\mM$.
    We will show $L = \mM$.
    Let $\bff = \sum_{i = 1}^n f_i \be_i$ be an element of $\mM$.
    
    Every monomial $m \coloneqq X_1^{n_1} \cdots X_k^{n_k}$ of every polynomial $f_i, i = 1, \ldots, n,$ can be written as two polynomials $m=p'+p''$ such that $p'' \be_i\in L$ and every monomial of $p'\be_i$ is in the set $S$.
    Indeed, denote $J \coloneqq \{1\leq j\leq k \mid n_j > \deg F_j\}$.
    We write $J = \{i_1, \ldots, i_d\}$ for some $d \leq k$, and where $i_1, \ldots, i_d, i_{d+1}, \ldots, i_k$ is some permutation of $(1, \ldots, k)$.
    We divide $X_j^{n_j}$ by $F_j$ for every $j \in J$ and obtain
\[
X_j^{n_j}=P_jF_j+R_j
\] 
with $\deg R_j < \deg F_j$. 
Then
\[
m-(P_{i_1}F_{i_1})X_{i_2}^{n_{i_2}} \cdots X_{i_k}^{n_{i_k}}=R_{i_1}X_{i_2}^{n_{i_2}}\cdots X_{i_k}^{n_{i_k}},
\]
with $(P_{i_1}F_{i_1})X_{i_2}^{n_{i_2}}\cdots X_{i_k}^{n_{i_k}} \be_i\in L$, and
\[
R_{i_1}X_{i_2}^{n_{i_2}}\cdots\,X_{i_k}^{n_{i_k}} - R_{i_1}(P_{i_2}F_{i_2})X_{i_3}^{n_{i_3}}\cdots\,X_{i_k}^{n_{i_k}}=R_{i_1}R_{i_2}X_{i_3}^{n_{i_3}}\cdots X_{i_k}^{n_{i_k}},
\]
with $R_{i_1}(P_{i_2}F_{i_2})X_{i_3}^{n_{i_3}}\cdots\,X_{i_k}^{n_{i_k}} \be_i\in L$.
Continue this process until $i_d$, we obtain $m = p' + p''$ where
\[
p' = R_{i_1} \cdots R_{i_d} X_{i_{d+1}}^{n_{i_{d+1}}}\cdots X_{i_k}^{n_{i_k}},
\]
and 
\begin{multline*}
    p'' = (P_{i_1}F_{i_1})X_{i_2}^{n_{i_2}} \cdots X_{i_k}^{n_{i_k}} + R_{i_1}(P_{i_2}F_{i_2})X_{i_3}^{n_{i_3}}\cdots\,X_{i_k}^{n_{i_k}} + \\
    \cdots + R_{i_1} \cdots R_{i_{d-1}} (P_{i_d} F_{i_d}) X_{i_{d+1}}^{n_{i_{d+1}}}\cdots X_{i_k}^{n_{i_k}}.
\end{multline*}
Clearly, every monomials of $p'$ is in the set $\mS$, and $p'' \be_i\in L$.

Decompose every monomial of every $f_i, i = 1, \ldots, n$.
In this way, $\bff$ can be written as $\bff = \bff' + \bff''$ with $\bff'' \in L$ and $\bff'$ generated by the set $\mR$ (note that every monomial of $\bff'$ is in the set $\mS$ and $\bff' \in \mM$).

    We have thus found the finite basis $\mR \cup \mF$ for $\mM$ as a $\A$-module.
    Note that the procedure described in the proof is effective: the computation of the characteristic polynomials $F_1, \ldots, F_k$ can be done by computing the determinants $\det(A_i - X I), i = 1, \ldots, k$, and the computation of $\ker(\lambda)$ is simply linear algebra.

\end{proof}

Note that since the entries of the matrices $A_1, \ldots, A_k$ and the vectors $\bc_1, \ldots, \bc_n$ are all rationals, the computation given in Proposition~\ref{prop:M} can be done over the rational.
Hence, we can suppose the computed basis of $\mM$ has rational coefficients.

\section{Positive polynomial membership}\label{sec:pos}
In this section we prove Proposition~\ref{prop:pos}.
Recall $\A^{++} \coloneqq \A^+ \setminus \{0\}$.
Let $\B \coloneqq \R[X_1^{\pm 1}, \ldots, X_k^{\pm 1}]$ be the Laurent polynomial ring over $k$ variables, and $\B^+ \coloneqq \R_{\geq 0}[X_1^{\pm 1}, \ldots, X_k^{\pm 1}]$ be the sub-semiring of Laurent polynomials with non-negative coefficients.
Define $\B^{++} \coloneqq \B^+ \setminus \{0\}$.

The first step to proving Proposition~\ref{prop:pos} is a quick reduction from $\left(\A^{+}\right)^n \setminus \{0^n\}$ to $\left(\A^{++}\right)^n$.
\begin{lem}\label{lem:+to++}
    Given an $\A$-submodule $\mM$ of $\A^n$ and a non-empty subset $I \subset \{1, \ldots, n\}$, define the following $\A$-submodule of $\A^I$:
    \[
    \mM_{I} \coloneqq \left\{ (f_i)_{i \in I} \;\middle|\;\sum_{i \in I} f_i \be_i \in \mM\right\} \subset \A^I.
    \]
    Then $\mM$ contains an element in $\left(\A^{+}\right)^n \setminus \{0^n\}$ if and only if there exists some non-empty set $I \subset \{1, \ldots, n\}$ such that $\mM_{I} \cap \left(\A^{++}\right)^{I} \neq \emptyset$.
\end{lem}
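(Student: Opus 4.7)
The lemma is essentially a matter of unpacking definitions, and I would prove both directions by straightforward construction.

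For the ``if'' direction, suppose there is a non-empty subset $I \subset \{1, \ldots, n\}$ and a tuple $(g_i)_{i \in I} \in \mM_I \cap (\A^{++})^I$. Define $\bff = \sum_{i \in n} f_i \be_i \in \A^n$ by setting $f_i = g_i$ for $i \in I$ and $f_i = 0$ otherwise. By definition of $\mM_I$, we have $\sum_{i \in I} g_i \be_i \in \mM$, which is exactly $\bff$, so $\bff \in \mM$. Each $f_i$ lies in $\A^+$ since $\A^{++} \subset \A^+$ and $0 \in \A^+$. Finally $\bff$ is non-zero because $I$ is non-empty and each $g_i \neq 0$.

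For the ``only if'' direction, suppose $\bff = (f_1, \ldots, f_n) \in \mM \cap \left((\A^+)^n \setminus \{0^n\}\right)$. Let $I := \{i \in \{1, \ldots, n\} : f_i \neq 0\}$. Since $\bff \neq 0^n$, the set $I$ is non-empty. The tuple $(f_i)_{i \in I}$ satisfies $\sum_{i \in I} f_i \be_i = \sum_{i=1}^n f_i \be_i = \bff \in \mM$, so $(f_i)_{i \in I} \in \mM_I$. For every $i \in I$, we have $f_i \in \A^+$ and $f_i \neq 0$, hence $f_i \in \A^{++}$. Thus $(f_i)_{i \in I} \in \mM_I \cap (\A^{++})^I$, which is therefore non-empty.

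There is no genuine obstacle here; the only subtlety is to identify the correct support set $I$ in the forward direction, and to observe that padding with zeros preserves membership in $\mM$ in the reverse direction. Both steps follow immediately from the $\A$-module definitions and the definition of $\mM_I$.
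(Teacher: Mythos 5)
Your proof is correct and follows essentially the same route as the paper's: take the support set $I = \{i \mid f_i \neq 0\}$ in one direction and pad with zeros in the other. (Only a typo to fix: ``$\sum_{i \in n}$'' should read $\sum_{i=1}^{n}$.)
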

\begin{proof}
    If $\mM$ contains an element $\bff = \sum_{i = 1}^n f_i \be_i \in \left(\A^{+}\right)^n \setminus \{0^n\}$.
    Let $I$ be the set $\{i \mid f_i \neq 0\}$.
    Then $(f_i)_{i \in I} \in \mM_I \cap \left(\A^{++}\right)^{I} \neq \emptyset$.

    If there exists non-empty $I \subset \{1, \ldots, n\}$ such that $\mM_{I} \cap \left(\A^{++}\right)^{I} \neq \emptyset$.
    Let $(f_i)_{i \in I} \in \mM_{I} \cap \left(\A^{++}\right)^{I}$, then $\sum_{i = 1}^n f_i \be_i \in \mM \cap \left(\A^{+}\right)^n \setminus \{0^n\} \neq \emptyset$.
\end{proof}
Given a finite basis for the $\A$-submodule $\mM$ of $\A^n$, it is easy to compute $\mM_I$ for any $I \subset \{1, \ldots, n\}$ using linear algebra over $\A$~\cite{bareiss1968sylvester}.
Therefore it suffices to devise a procedure that decide whether $\mM \cap \left(\A^{++} \right)^n \neq \emptyset$ for a given $\mM$.
This procedure can then be used to decide whether $\mM \cap \left(\A^{+}\right)^n \setminus \{0^n\} \neq \emptyset$ by verifying whether there exists non-empty $I \subset \{1, \ldots, n\}$ such that $\mM_I \cap \left(\A^{++}\right)^{I} \neq \emptyset$.

\begin{lem}\label{lem:AtoB}
    Let $\mM = \bg_1 \A + \cdots + \bg_m \A$ be an $\A$-submodule of $\A^n$.
    Define $\mM_{\B} \coloneqq \bg_1 \B + \cdots + \bg_m \B$; it is a $\B$-submodule of $\B^n$.
    Then $\mM \cap \left(\A^{++} \right)^n \neq \emptyset$ if and only if $\mM_{\B} \cap \left(\B^{++} \right)^n \neq \emptyset$.
\end{lem}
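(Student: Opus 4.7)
The forward implication is immediate: since $\A \subset \B$ we have $\mM \subset \mM_\B$, and since $(\A^{++})^n \subset (\B^{++})^n$, any witness on the left is also a witness on the right.

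For the converse, the plan is the standard ``clear denominators'' trick. Given $\bff \in \mM_\B \cap (\B^{++})^n$, fix an expansion $\bff = \sum_{j=1}^m h_j \bg_j$ with $h_j \in \B$. Since each $h_j$ has only finitely many monomials, one can choose $N_1, \ldots, N_k \in \N$ large enough that the monomial $M \coloneqq X_1^{N_1} \cdots X_k^{N_k}$ satisfies $M h_j \in \A$ for every $j$. Then $M \bff = \sum_{j=1}^m (M h_j) \bg_j$ lies in $\mM$, not merely in $\mM_\B$.

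It remains to verify that $M \bff$ is itself a valid witness in $(\A^{++})^n$. Multiplication by $M$, a monomial with a single positive coefficient, carries $\B^+$ into $\B^+$, so $M \bff \in (\B^+)^n$; and since $M$ is a unit of $\B$ while each coordinate of $\bff$ is non-zero, each coordinate of $M \bff$ is again non-zero. After possibly enlarging each $N_i$ further so that $M$ also absorbs the most negative exponent appearing in any coordinate of $\bff$, we secure $M \bff \in \A^n$, and hence $M \bff \in \mM \cap (\A^{++})^n$. There is no serious obstacle here; the one subtlety is that the single monomial $M$ plays two distinct roles simultaneously --- it must clear negative exponents in each chosen coefficient $h_j$, and also in each coordinate of $\bff$ itself --- but both requirements are met by simply taking every $N_i$ sufficiently large.
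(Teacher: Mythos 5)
Your proof is correct and follows essentially the same route as the paper: the forward direction by inclusion, and the converse by clearing denominators with a single monomial $X_1^{N_1}\cdots X_k^{N_k}$. The only difference is that your final "enlarging" step is redundant — once $M h_j \in \A$ for every $j$, the identity $M\bff = \sum_j (M h_j)\bg_j$ with $\bg_j \in \A^n$ already forces $M\bff \in \A^n$, so the negative exponents in the coordinates of $\bff$ are cleared automatically.
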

\begin{proof}
    Obviously $\mM \subset \mM_{\B}$ and $\A^{++} \subset \B^{++}$, so $\mM \cap \left(\A^{++} \right)^n \neq \emptyset \implies \mM_{\B} \cap \left(\B^{++} \right)^n \neq \emptyset$.
    
    For the other direction, suppose $\mM_{\B} \cap \left(\B^{++} \right)^n \neq \emptyset$.
    Let $\bff = \sum_{i = 1}^m h_i \bg_i \in \left(\B^{++} \right)^n$.
    Then there exists $n_1, \ldots, n_k \in \N$ such that $X_1^{n_1} \cdots X_k^{n_k} h_i \in \A$ for all $1 \leq i \leq m$.
    So 
    \[
    X_1^{n_1} \cdots X_k^{n_k} \bff = \sum_{i = 1}^m X_1^{n_1} \cdots X_k^{n_k} h_i \bg_i \in \mM \cap \left(\A^{++} \right)^n.
    \]
\end{proof}

By Lemma~\ref{lem:AtoB}, it suffices to consider finitely generated $\B$-submodules of $\B^n$.
The key ingredient of our proof is the following consequence of a deep result by Einsiedler, Mouat, and Tuncel~\cite{einsiedler2003does}.

\begin{lem}[{Corollary of \cite[Procedure~6.3]{einsiedler2003does}}]\label{lem:cor}
    Suppose a finite basis of a submodule $\mM_{\B}$ of $\B^n$ is given.
    One can decide whether $\mM_{\B} \cap \left(\B^{++} \right)^n \neq \emptyset$, provided an effective procedure for the following problem:

    \begin{itemize}
    \item \textbf{(ExistPos):} Given a rectangular set $K \coloneqq [p, q]^d \subset \R_{>0}^d$, decide if for all $\ba \in K$ there exists $\bff = (f_1, \dots, f_n) \in \mM_{\B}$ such that $\bff(\ba) \coloneqq \left(f_1(\ba), \ldots, f_n(\ba)\right) \in \R_{>0}^n$.
    \end{itemize}

    Furthermore, we can find an element of $\mM_{\B} \cap \left(\B^{++} \right)^n$ with rational coefficients if the set is non-empty.
\end{lem}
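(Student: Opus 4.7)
The plan is to derive Lemma~\ref{lem:cor} as a direct application of Procedure~6.3 in~\cite{einsiedler2003does}, where the role of the auxiliary oracle invoked inside that procedure is played by (ExistPos). First, I would recall the structure of the Einsiedler--Mouat--Tuncel procedure: given a finite basis of $\mM_{\B}$, it performs a branching search that subdivides a compact region of $\R_{>0}^d$ — parameterised by the Newton polytopes of the generators $\bg_1, \ldots, \bg_m$ — into rectangular subregions, and on each such region asks, in essence, whether every point $\ba$ admits some witness $\bff \in \mM_{\B}$ with $\bff(\ba) \in \R_{>0}^n$. This is exactly (ExistPos). A ``yes'' answer prunes the current branch, while a ``no'' answer yields a point at which no positive witness exists, which drives further refinement or, in the terminal case, certifies non-existence globally. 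Termination with (ExistPos) as oracle is guaranteed by the finiteness argument of~\cite{einsiedler2003does}, which bounds the branching depth in terms of the Newton polytopes of the $\bg_i$.

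For the second assertion — producing an element of $\mM_{\B} \cap (\B^{++})^n$ with rational coefficients when one exists — I would exploit the fact that Procedure~6.3, when successful, does not merely certify nonemptiness but constructs a witness of the form $\bff = \sum_{i=1}^m h_i \bg_i$ with effectively bounded supports for the $h_i \in \B$. Once those supports are fixed, the condition $\bff \in (\B^{++})^n$ becomes finitely many linear inequalities on the coefficients of the $h_i$, defining a rational polyhedron (the $\bg_i$ being given with rational coefficients). This polyhedron is nonempty because a real solution exists, so a rational solution can be extracted by standard linear programming.

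The main obstacle I anticipate is reconciling the precise format of the oracle required inside~\cite{einsiedler2003does} with (ExistPos) as stated. In the source, the sub-queries may be phrased in terms of semi-algebraic or open subsets of $\R_{>0}^d$, or in terms of a single point rather than a whole rectangular box, whereas (ExistPos) is restricted to closed rational boxes $[p,q]^d$. Bridging this gap will require a short compactness and bisection argument: each sub-query of the original procedure can be covered by finitely many rational boxes (shrinking them until (ExistPos) returns a uniform answer), and one has to verify that choosing rational endpoints $p, q$ does not lose information because the obstruction set for a fixed module has piecewise-algebraic structure on which the oracle answer is locally constant away from a measure-zero set.
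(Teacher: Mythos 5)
There is a genuine gap, and it sits exactly where you flag your ``main obstacle.'' Your argument relies on a reconstructed internal structure of \cite[Procedure~6.3]{einsiedler2003does} (a branching search over rectangular subregions, with ``no'' answers producing witness points that drive refinement, and termination bounded by Newton-polytope depth) that you have not verified, and then patches the mismatch between that imagined oracle and \textbf{ExistPos} with a compactness/bisection argument whose key claim is unjustified: that the oracle answer is ``locally constant away from a measure-zero set.'' Since the relevant condition is universally quantified over \emph{every} point of the region, an algorithm cannot discard a measure-zero exceptional set --- a single point of $K$ at which no element of $\mM_{\B}$ is componentwise positive already changes the answer --- so this bridging step does not stand as written. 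The paper needs none of this: by the remark following Procedure~6.3 in \cite{einsiedler2003does}, the procedure is effective as soon as their Condition~1.3(a) can be checked algorithmically for the specific compact set $K$ constructed in their Lemma~5.3, and the proof of that lemma already allows $K$ to be taken of the form $[p,q]^d \subset \R_{>0}^d$; Condition~1.3(a) for such a $K$ \emph{is} \textbf{ExistPos} verbatim, so no covering, shrinking, or genericity argument is required. Moreover, the genuine termination subtlety is not a branching-depth bound but the fact that the bound $\ell$ appearing in the procedure is not known to be computable; this is resolved (as in the paper's footnote) by observing that the mere existence of $\ell$ guarantees the search halts, which your proposal does not address.

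Your treatment of the second assertion (extracting a rational witness) is reasonable and roughly consistent with what one would do: once the supports of the coefficients $h_i$ in $\bff = \sum_{i=1}^m h_i \bg_i$ are fixed, positivity of all coefficients of $\bff$ is a finite system of strict linear inequalities with rational data, so a rational solution exists whenever a real one does. But as it stands the main decidability claim rests on an interface with \cite{einsiedler2003does} that is asserted rather than established, so the proof is incomplete without reading off the precise statement of the remark after Procedure~6.3 and of Lemma~5.3 there, as the paper does.
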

\begin{proof}
    Consider the result of Einsiedler, Mouat, and Tuncel~\cite[Procedure~6.3]{einsiedler2003does}.
    By the remark following Procedure~6.3, this procedure becomes effective if \cite[Condition~1.3(a)]{einsiedler2003does} can be checked algorithmically for the compact set $K$ defined in \cite[Lemma~5.3]{einsiedler2003does}.\footnote{The second question mentioned in the remark (whether there is a computable bound on $\ell$ in the procedure?) does not hinder decidability since the existence of $\ell$ is guaranteed, according to the remark.}
    Therefore, it suffices to provide an effective procedure that checks \cite[Condition~1.3(a)]{einsiedler2003does} for the compact set $K$.
    By the proof of \cite[Lemma~5.3]{einsiedler2003does}, one can even suppose $K$ to be of the form $[p, q]^d \subset \R_{>0}^d$.
    Then \cite[Condition~1.3(a)]{einsiedler2003does} is equivalent to the problem \textbf{ExistPos}.
\end{proof}


\begin{lem}\label{lem:dual2}
    Fix an $\ba \in \R_{> 0}^k$ and let $\mM_{\B} = \bg_1 \B + \cdots + \bg_m \B$.
    Then the two following conditions are equivalent:
    \begin{enumerate}[(i)]
        \item There exists $\bff = (f_1, \dots, f_n) \in \mM_{\B}$ such that $\bff(\ba) \in \R_{>0}^n$.
        \item There exists reals $r_1, \ldots, r_m \in \R$ such that $\sum_{i = 1}^m r_i \bg_i(\ba) \in \R_{>0}^n$.
    \end{enumerate}
\end{lem}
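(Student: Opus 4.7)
The plan is to exploit the fact that evaluation at the fixed point $\ba$ is a ring homomorphism $\mathrm{ev}_{\ba} \colon \B \to \R$, which extends coordinate-wise to an $\R$-linear map $\B^n \to \R^n$. Under this evaluation, the image of the $\B$-module $\mM_{\B} = \bg_1 \B + \cdots + \bg_m \B$ is precisely the $\R$-linear span of the vectors $\bg_1(\ba), \ldots, \bg_m(\ba)$ in $\R^n$. Once this observation is made, both directions of the equivalence follow almost immediately.

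For (ii) $\implies$ (i), given reals $r_1, \ldots, r_m$ with $\sum_{i=1}^m r_i \bg_i(\ba) \in \R_{>0}^n$, I would view each $r_i$ as a constant Laurent polynomial in $\B$ and set $\bff \coloneqq \sum_{i=1}^m r_i \bg_i$, which lies in $\mM_{\B}$ because constants belong to $\B$. Evaluation at $\ba$ then gives $\bff(\ba) = \sum_{i=1}^m r_i \bg_i(\ba) \in \R_{>0}^n$, as required. For (i) $\implies$ (ii), given $\bff = \sum_{i=1}^m h_i \bg_i \in \mM_{\B}$ with $\bff(\ba) \in \R_{>0}^n$, I would simply set $r_i \coloneqq h_i(\ba) \in \R$; because evaluation is a ring homomorphism, $\sum_{i=1}^m r_i \bg_i(\ba) = \sum_{i=1}^m h_i(\ba) \bg_i(\ba) = \bff(\ba) \in \R_{>0}^n$.

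There is no real obstacle here: the content of the lemma is just that, once $\ba$ is fixed, the positivity condition at $\ba$ depends only on the finite-dimensional $\R$-span of the generator evaluations rather than on the full $\B$-module structure of $\mM_{\B}$. The point of this reformulation, and the reason for isolating it as a separate lemma, is that condition (ii) is a purely semi-algebraic statement in the finitely many real unknowns $r_1, \ldots, r_m$ together with the coordinates of $\ba$. This prepares the way for dispatching the universal quantifier over $\ba \in K$ in \textbf{ExistPos} using the decidability of the first-order theory of the reals.
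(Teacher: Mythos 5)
Your proof is correct and coincides with the paper's own argument: both directions are handled by the same substitutions ($r_i \coloneqq h_i(\ba)$ for (i)~$\implies$~(ii), and viewing the $r_i$ as constant Laurent polynomials for (ii)~$\implies$~(i)), using that evaluation at $\ba$ is a ring homomorphism. No differences worth noting.
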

\begin{proof}
    (i) $\implies$ (ii).
    Let $\bff \in \mM_{\B}$ such that $\bff(\ba) \in \R_{>0}^n$.
    Suppose $\bff = \sum_{j = 1}^m h_j \bg_j$ for some $h_1, \ldots, h_m \in \B$, then we have $\bff(\ba) = \sum_{j = 1}^m h_j(\ba) \bg_j(\ba)$.
    Taking $r_1 \coloneqq h_1(\ba), \ldots, r_m \coloneqq h_m(\ba)$ yields (ii).

    (ii) $\implies$ (i).
    Let $r_1, \ldots, r_m \in \R$ be reals such that $\sum_{i = 1}^m r_i \bg_i(\ba) \in \R_{>0}^n$.
    Taking $\bff \coloneqq \sum_{i = 1}^m r_i \bg_i \in \mM$ yields (i).
\end{proof}

\begin{prop}\label{prop:dec++}
    Suppose a finite basis of a submodule $\mM$ of $\A^n$ is given.
    One can decide whether $\mM \cap \left(\A^{++} \right)^n \neq \emptyset$.
\end{prop}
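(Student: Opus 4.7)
The plan is to chain together the reductions already set up in this section and finish with a single appeal to decidability of the first-order theory of the reals.

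First I would apply Lemma~\ref{lem:AtoB} to replace the problem ``$\mM \cap (\A^{++})^n \neq \emptyset$'' with the equivalent problem ``$\mM_{\B} \cap (\B^{++})^n \neq \emptyset$'', where $\mM_{\B} = \bg_1 \B + \cdots + \bg_m \B$ is built from the given basis of $\mM$. Then I would invoke Lemma~\ref{lem:cor}: this reduces the task to providing an effective procedure for \textbf{ExistPos}, namely, given a rectangular set $K = [p,q]^k \subset \R_{>0}^k$ (with rational endpoints, as produced by \cite[Lemma~5.3]{einsiedler2003does}), decide whether for every $\ba \in K$ there exists $\bff \in \mM_{\B}$ with $\bff(\ba) \in \R_{>0}^n$.

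Next I would use Lemma~\ref{lem:dual2} to rewrite the inner existential: for each fixed $\ba$, the existence of such an $\bff$ is equivalent to the existence of real coefficients $r_1, \ldots, r_m$ such that $\sum_{i=1}^m r_i \bg_i(\ba) \in \R_{>0}^n$. Writing $\bg_i = (g_{i,1}, \ldots, g_{i,n})$ with $g_{i,j} \in \B$ having rational coefficients, the condition defining \textbf{ExistPos} becomes the first-order sentence
\[
\forall \ba \in K \; \exists r_1, \ldots, r_m \in \R \; \bigwedge_{j=1}^n \sum_{i=1}^m r_i g_{i,j}(\ba) > 0,
\]
where $\ba \in K$ is expressed by the conjunction of inequalities $p \leq a_\ell \leq q$ for $\ell = 1, \ldots, k$. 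Clearing denominators in the Laurent monomials (multiplying each side of each strict inequality by a suitable monomial in $\ba$, which is positive on $K$) turns the $g_{i,j}(\ba)$ into ordinary polynomials in $\ba$ with rational coefficients, so this becomes a sentence in the first-order theory of $(\R, +, \cdot, <, 0, 1)$ with rational parameters.

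By Tarski's decidability theorem for the first-order theory of the reals (effectively via cylindrical algebraic decomposition), this sentence can be decided algorithmically. This yields the desired procedure for \textbf{ExistPos}, which, plugged into Lemma~\ref{lem:cor}, decides whether $\mM_{\B} \cap (\B^{++})^n \neq \emptyset$, and hence by Lemma~\ref{lem:AtoB} whether $\mM \cap (\A^{++})^n \neq \emptyset$. There is no real obstacle here beyond bookkeeping: every reduction has been assembled in the preceding lemmas, and the only nontrivial step is the appeal to Tarski's theorem to handle \textbf{ExistPos}, which is legitimate precisely because the basis $\bg_1, \ldots, \bg_m$ is given with rational coefficients and $K$ is a rational rectangle.
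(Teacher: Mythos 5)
Your proposal is correct and follows essentially the same route as the paper: Lemma~\ref{lem:AtoB}, then Lemma~\ref{lem:cor} to reduce to \textbf{ExistPos}, then Lemma~\ref{lem:dual2} to turn the inner existential into a linear condition on $r_1,\ldots,r_m$, and finally Tarski's theorem. The only difference is your extra remark about clearing Laurent denominators, which is harmless but unnecessary since the generators $\bg_1,\ldots,\bg_m$ already lie in $\A^n$.
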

\begin{proof}
    By Lemma~\ref{lem:AtoB}, it suffices to decide whether $\mM_{\B} \cap \left(\B^{++}\right)^n \neq \emptyset$.
    By Lemma~\ref{lem:cor}, it suffices to give a decision procedure for the problem \textbf{ExistPos}.
    By Lemma~\ref{lem:dual2}, \textbf{ExistPos} has a positive answer if and only if the following statement in the first order theory of reals is true:
    \begin{align}
    & \forall a_1 \cdots \forall a_d \exists r_1 \cdots \exists r_m, \nonumber \\
    & \bigg(a_1 \in [p,q] \land \cdots \land a_d \in [p,q]\bigg)
    \implies \sum_{i = 1}^m r_i \bg_i(a_1, \ldots, a_d) \in \R_{>0}^n.
    \end{align}
    The truth of such statements is decidable by Tarski's theorem~\cite{Tarski1949}.
\end{proof}

Combining Lemma~\ref{lem:+to++} and Proposition~\ref{prop:dec++} immediately yields Proposition~\ref{prop:pos}.

\proppos*
\begin{proof}
    By Lemma~\ref{lem:+to++}, $\mM$ contains an element of $\left(\A^+\right)^n \setminus \{0^n\}$ if and only if there exists a non-empty set $I \subset \{1, \ldots, n\}$ such that $\mM_I \cap \left(\A^{++}\right)^{I} \neq \emptyset$.
    Therefore, it suffices to enumerate all non-empty sets $I \subset \{1, \ldots, n\}$, compute a basis for $\mM_I$, and use Proposition~\ref{prop:dec++} (with $n \coloneqq \card(I)$) to check whether $\mM_I \cap \left(\A^{++}\right)^{I} \neq \emptyset$.
\end{proof}

\section{Conclusion and future work}
In this paper we proved decidability of termination of linear loops under commutative updates, where the guard condition is a polyhedral cone.
A natural continuation of our work would be to generalize Theorem~\ref{thm:main} to the case where $\mC$ is a polyhedron (instead of a polyhedral cone).
Unfortunately, the duality argument that is crucial in our proof stops working when we remove the homogeneity of the guard conditions.
New techniques, possibly combining geometric and algebraic arguments, might be needed to overcome this difficulty.
Another possible generalization is the removal of the commutativity assumption on the update matrices.
For this, one would need a generalization of Einsiedler, Mouat, and Tuncel's result to left-modules over non-commutative polynomials rings, which would have deep consequences in the field of real algebra.

\bibliography{pointed}

\begin{thebibliography}{10}

\bibitem{babai1996multiplicative}
L{\'a}szl{\'o} Babai, Robert Beals, Jin-yi Cai, G{\'a}bor Ivanyos, and
  Eugene~M. Luks.
\newblock Multiplicative equations over commuting matrices.
\newblock In {\em Proceedings of the Seventh Annual ACM-SIAM Symposium on
  Discrete Algorithms}, pages 498--507, 1996.

\bibitem{bareiss1968sylvester}
Erwin~H. Bareiss.
\newblock Sylvester’s identity and multistep integer-preserving gaussian
  elimination.
\newblock {\em Mathematics of computation}, 22(103):565--578, 1968.

\bibitem{ben2019multiphase}
Amir~M Ben-Amram, Jes{\'u}s~J Dom{\'e}nech, and Samir Genaim.
\newblock Multiphase-linear ranking functions and their relation to recurrent
  sets.
\newblock In {\em International Static Analysis Symposium}, pages 459--480.
  Springer, 2019.

\bibitem{ben2014ranking}
Amir~M. Ben-Amram and Samir Genaim.
\newblock Ranking functions for linear-constraint loops.
\newblock {\em Journal of the ACM (JACM)}, 61(4):1--55, 2014.

\bibitem{boyd2004convex}
Stephen Boyd and Lieven Vandenberghe.
\newblock {\em Convex Optimization}.
\newblock Cambridge University Press, 2004.

\bibitem{braverman2006termination}
Mark Braverman.
\newblock Termination of integer linear programs.
\newblock In {\em International conference on computer aided verification},
  pages 372--385. Springer, 2006.

\bibitem{einsiedler2003does}
Manfred Einsiedler, Robert Mouat, and Selim Tuncel.
\newblock When does a submodule of $\left(\mathbb{R}[x_1, \ldots,
  x_k]\right)^n$ contain a positive element?
\newblock {\em Monatshefte f{\"u}r Mathematik}, 140(4):267--283, 2003.

\bibitem{HosseiniO019}
Mehran Hosseini, Jo{\"{e}}l Ouaknine, and James Worrell.
\newblock Termination of linear loops over the integers.
\newblock In {\em 46th International Colloquium on Automata, Languages, and
  Programming, {ICALP} 2019, July 9-12, 2019, Patras, Greece}, volume 132 of
  {\em LIPIcs}, pages 118:1--118:13. Schloss Dagstuhl - Leibniz-Zentrum
  f{\"{u}}r Informatik, 2019.
\newblock \href {https://doi.org/10.4230/LIPIcs.ICALP.2019.118}
  {\path{doi:10.4230/LIPIcs.ICALP.2019.118}}.

\bibitem{hrushovski2018polynomial}
Ehud Hrushovski, Jo{\"e}l Ouaknine, Amaury Pouly, and James Worrell.
\newblock Polynomial invariants for affine programs.
\newblock In {\em Proceedings of the 33rd Annual ACM/IEEE Symposium on Logic in
  Computer Science}, pages 530--539, 2018.

\bibitem{jeannet2014abstract}
Bertrand Jeannet, Peter Schrammel, and Sriram Sankaranarayanan.
\newblock Abstract acceleration of general linear loops.
\newblock In {\em Proceedings of the 41st ACM SIGPLAN-SIGACT Symposium on
  Principles of Programming Languages}, pages 529--540, 2014.

\bibitem{kincaid2019closed}
Zachary Kincaid, Jason Breck, John Cyphert, and Thomas Reps.
\newblock Closed forms for numerical loops.
\newblock {\em Proceedings of the ACM on Programming Languages}, 3(POPL):1--29,
  2019.

\bibitem{lefaucheux2021porous}
Engel Lefaucheux, Jo{\"e}l Ouaknine, David Purser, and James Worrell.
\newblock Porous invariants.
\newblock In {\em International Conference on Computer Aided Verification},
  pages 172--194. Springer, 2021.

\bibitem{ouaknine2019decidability}
Jo{\"e}l Ouaknine, Amaury Pouly, Jo{\~a}o Sousa-Pinto, and James Worrell.
\newblock On the decidability of membership in matrix-exponential semigroups.
\newblock {\em Journal of the ACM (JACM)}, 66(3):1--24, 2019.

\bibitem{ouaknine2014positivity}
Jo{\"e}l Ouaknine and James Worrell.
\newblock Positivity problems for low-order linear recurrence sequences.
\newblock In {\em Proceedings of the twenty-fifth annual ACM-SIAM Symposium on
  Discrete Algorithms}, pages 366--379. SIAM, 2014.

\bibitem{Tarski1949}
Alfred Tarski.
\newblock {\em A Decision Method for Elementary Algebra and Geometry}.
\newblock second ed., rev., Univ. of California Press, Berkeley, 1951.

\bibitem{tiwari2004termination}
Ashish Tiwari.
\newblock Termination of linear programs.
\newblock In {\em International Conference on Computer Aided Verification},
  pages 70--82. Springer, 2004.

\end{thebibliography}

\end{document}